\title{Fast Cut-Elimination using Proof Terms:\\
An Empirical Study}
\author{Gabriel Ebner
\institute{TU Wien, Austria\thanks{Supported by the Vienna
Science and Technology Fund (WWTF) project \mbox{VRG12-004}.}}
\email{gebner@gebner.org}
}
\begin{document}
\maketitle

\newtheorem{theorem}{Theorem}
\newtheorem{lemma}[theorem]{Lemma}
\newtheorem{conjecture}[theorem]{Conjecture}

\newcommand\sbst{\backslash}
\newcommand\HasTy[1][]{\;::_{#1}\;}
\newcommand\bn{\hspace{-2.5pt}:}

\newcommand\newlktctr[1]{%
  \expandafter\newcommand\csname #1\endcsname{%
    \ensuremath{\mathrm{#1}}\xspace}}
\newlktctr{Cut}
\newlktctr{Ax}
\newlktctr{Rfl}
\newlktctr{TopR}
\newlktctr{NegL}
\newlktctr{NegR}
\newlktctr{AndL}
\newlktctr{AndR}
\newlktctr{AllL}
\newlktctr{AllR}
\newlktctr{Eql}
\newlktctr{Ind}

\newcommand\newnonterm[1]{%
  \expandafter\newcommand\csname #1\endcsname{%
    \ensuremath{\mathit{#1}}\xspace}}
\newnonterm{Hyp}
\newnonterm{Term}
\newnonterm{Expr}
\newnonterm{Var}
\newnonterm{Formula}
\newnonterm{Bool}

\newcommand\LKt{\ensuremath{\mathrm{LK}_\mathrm t}\xspace}

\newcommand\NN{\mathbb N}

\begin{abstract}
  Urban and Bierman introduced a calculus of proof terms for the sequent
  calculus LK with a strongly normalizing reduction relation
  in~\cite{Urban2001Strong}.  We extend this calculus to simply-typed
  higher-order logic with inferences for induction and equality, albeit
  without strong normalization.  We implement this calculus in GAPT, our
  library for proof transformations.  Evaluating the normalization on
  both artificial and real-world benchmarks, we show that this algorithm
  is typically several orders of magnitude faster than the existing
  Gentzen-like cut-reduction, and an order of magnitude faster than any
  other cut-elimination procedure implemented in GAPT.
\end{abstract}


\section{Introduction}

Cut-elimination is perhaps the most fundamental operation in proof
theory, first introduced by Gentzen in~\cite{Gentzen1935Untersuchungen}.
Its importance is underlined by a wide variety of its applications; one
application in particular motivates our interest in cut-elimination:
cut-free proofs directly contain Herbrand disjunctions.

Herbrand's theorem~\cite{Herbrand1930Recherches,Buss1995Herbrands}
captures the insight that the validity of a quantified formula is
characterized by the existence of a tautological finite set of
quantifier-free instances.  In its simplest case, the validity of a
purely existential formula $\exists x\: \varphi(x)$ is characterized by
the existence of a tautological disjunction of instances $\varphi(t_1)
\lor \dots \lor \varphi(t_n)$, a Herbrand disjunction.  Expansion proofs
generalize this result to higher-order logic in the form of elementary type
theory~\cite{Miller1987Compact}.

A computational implementation of Herbrand's theorem as provided by
cut-elimination lies at the foundation of many applications in
computational proof theory: if we can compress the Herbrand disjunction
extracted from a proof using a special kind of tree grammar, then we can
introduce a cut into the proof which reduces the number of quantifier
inferences---in practice this method finds interesting non-analytic
lemmas~\cite{Hetzl2012Towards, Hetzl2014Algorithmic,
Hetzl2014Introducing, Ebner2018generation}.  A similar approach can be
used for automated inductive theorem proving, where the tree grammar
generalizes a finite sequence of Herbrand
disjunctions~\cite{Eberhard2015Inductive}.  By comparing the Herbrand
disjunctions of proofs, we obtain a notion of proof equality that
identifies proofs which use the same quantifier
instances~\cite{Baaz2008CERES}.  Automated theorem provers typically use
Skolem functions; expansion proofs admit a particularly elegant
transformation that eliminates these Skolem functions and turns a proof
of a Skolemized formula into a proof of the original statement in linear
time~\cite{Baaz2012complexity}.  Herbrand disjunctions directly contain
witnesses for the existential quantifiers and hence capture a certain
computational interpretation of classical proofs.  Furthermore,
Luckhardt used Herbrand disjunctions to give a polynomial bound on the
number of solutions in Roth's theorem~\cite{Luckhardt1989Herbrand} in
the area of Diophantine approximation.

Our GAPT system for proof transformations contains implementations of
many of these Herbrand-based algorithms~\cite{Ebner2016System}, as well as
various proofs formalized in the sequent calculus LK and several
cut-elimination procedures.  However in practice we have proofs where
none of these procedures are successful, due to multiple reasons: the
performance may be insufficient, higher-order cuts cannot be treated,
induction cannot be unfolded, or special-purpose inferences such as
proof links are not supported.

The normalization procedure described in this paper has all of these features:
it is fast, supports higher-order cuts, can unfold induction inferences, and
does not fail in the presence of special-purpose inference rules.  This
procedure is based on a term calculus for LK described by Urban and
Bierman~\cite{Urban2001Strong}.  It is self-evident that proof normalization
can be implemented more efficiently using the Curry-Howard correspondence to
compute with proof terms instead of trees of sequents, as this significantly
reduces the bureaucracy required during reduction.  We also considered other
calculi such as the $\lambda\mu$-~\cite{Parigot1992Lambda} or the
$\lambda^{Sym}$-calculus~\cite{Barbanera1996Symmetric}.  In the end we decided
on the present calculus because of its close similarity to LK, as it
allows us to straightforwardly integrate special-purpose inferences.

In \cref{seccalculus} we present the syntax and typing rules for the
calculus as implemented in GAPT.  We then briefly describe the
implementation of the normalization procedure in \cref{secnorm}.  Its
performance is then empirically evaluated on both artificial and
real-world proofs in \cref{seceval}.  Finally, potential future work is
discussed in \cref{secfuture}.

One of the proofs on which we evaluate this normalization procedure
in \cref{seceval} is Furstenberg's famous proof of the infinitude of
primes~\cite{Furstenberg1955infinitude}.  Cut-elimination was also used
by Girard~\cite[annex 7.E]{Girard1987Proof} to analyze another proof of
Furstenberg that shows van der Waerden's theorem using ergodic
theory~\cite{Furstenberg1981Recurrence}.

\section{Calculus}\label{seccalculus}

The proof system is modeled closely after the calculus described in the
paper by Urban and Bierman~\cite{Urban2001Strong}.  Since the paper does
not give a name to the introduced calculus, we call our variant \LKt as
an abbreviation for ``LK with terms''.  Proofs in \LKt operate on
hypotheses (called names and co-names in~\cite{Urban2001Strong}), which
name formula occurrences in the current sequent.  We found it useful to
have a single type that combines both the names and co-names
of~\cite{Urban2001Strong} since it reduces code duplication.  Each
formula in a sequent is labelled by a hypothesis:
\vspace{-2ex}
\begin{prooftree}
  \AxiomC{$h_2\bn \varphi(t), h_1\bn \forall x\: \varphi(x), \Gamma
  \vdash \Delta$}
  \UnaryInfC{$h_1\bn \forall x\: \varphi(x), \Gamma \vdash \Delta$}
\end{prooftree}

Expressions in the object language are lambda expressions with simple
types: an expression is either a variable, a constant, a lambda
abstraction, or a function application.  Connectives and quantifiers
such as $\land^{o\to o\to o}$ and $\forall^{(\alpha\to o)\to o}_\alpha$
are represented as constants of the type indicated in the superscript.  Formulas are
expressions of type~$o$, which is the type of Booleans.  We identify
$\alpha\beta$-equal expressions.  A substitution $\sigma = [x_1 \sbst
s_1, \dots, x_n \sbst s_n]$ is a type-preserving map from variables to
expressions.  Given an expression $t$, we write $t\sigma$ for the
(capture-avoiding) application of the substitution $\sigma$ to $t$.

This language can express impredicative quantification over types of
arbitrary rank, such as predicates on predicates on functions: for
example $\forall f \forall C (C(Af) \to C(Bf)) \to \forall D\: (DA \to
DB)$ is syntactic sugar for $(\to) ((\forall) (\lambda f\: (\forall)
(\lambda C\: (\to)(C(Af))(C(Bf))))) ((\forall) (\lambda D\: (\to) (DA)
(DB)))$ where $A,B$ are predicates of type $(i\to i)\to o$ and the
quantifiers range over the variables $f^{i\to i}, C^{o \to o}$, and
$D^{((i\to i)\to o) \to o}$.  This formula expresses a form of
extensionality for such predicates, and is not provable in \LKt.

The proof terms are almost untyped: in contrast to
\cite{Urban2001Strong}, we include the cut formula in the proof term for
the cut inference to perform type-checking without higher-order unification.
A typing judgement then tells us what sequent a proof proves.
\Cref{lktsyntax} shows the syntax for the proof terms.  Hypothesis
arguments that are not bound are called main formulas: for example $h_1$
is a main formula of $\NegL(h_1, h_2\bn \pi)$.

\begin{figure}[t]
\begin{align*}
  \Hyp ::= {}& -\NN^+ \mid +\NN^+ \\
  \Term ::= {}& \Ax(\Hyp, \Hyp) \mid \TopR(\Hyp) \\
     \mid{}& \Cut(\Formula, \Hyp\bn \Term, \Hyp\bn \Term) \\
     \mid{}& \NegL(\Hyp, \Hyp\bn \Term)
     \mid \NegR(\Hyp, \Hyp\bn \Term) \\
     \mid{}& \AndL(\Hyp, \Hyp\bn \Hyp\bn \Term)
     \mid \AndR(\Hyp, \Hyp\bn \Term, \Hyp\bn \Term) \\
     \mid{}& \AllL(\Hyp, \Expr, \Hyp\bn \Term)
     \mid \AllR(\Hyp, \Var\bn \Hyp\bn \Term) \\
     \mid{}& \Rfl(\Hyp) \mid \Eql(\Hyp, \Hyp, \Bool, \Expr, \Hyp\bn \Term) \\
     \mid{}& \Ind(\Hyp, \Expr, \Expr,
       \Hyp\bn \Term,
       \Var\bn \Hyp\bn \Hyp\bn \Term)
\end{align*}\vspace{-5ex}
\caption{Syntax of \LKt}
\label{lktsyntax}
\end{figure}

We use named variables as a binding strategy for the hypotheses in
consistency with the implementation of the lambda expressions (as
opposed to de Bruijn indices or a locally nameless representation).
Hypotheses are stored as machine integers.  A negative hypothesis refers
to a formula in the antecedent, and a positive hypothesis refers to a
formula in the succedent of the sequent. The notation $\Hyp\bn \Term$
means that \Hyp is a bound variable in \Term, c.f.\ the notation of abstract
binding trees in~\cite{Harper2016Practical}. This encoding of LK is also
very similar to the encoding commonly used in logical frameworks (LF),
see~\cite{Pfenning1995Structural} for a description of such an approach.

Notably, there are no terms for weakening and contraction.  These are
implicit: we can use the same hypothesis zero or multiple times.  The
proof terms only contain new information that is not contained in the
end-sequent; only cut formulas, weak quantifier instance terms, and
eigenvariables are stored.  We do not repeat the formulas or atoms of
the end-sequent.

\begin{figure}[t]
\[
\AxiomC{}
\UnaryInfC{$\Ax(h_1, h_2)  \HasTy[\sigma]  h_1\bn \varphi, \Gamma \vdash \Delta, h_2\bn \varphi$}
\DisplayProof
\quad \quad \quad
\AxiomC{}
\UnaryInfC{$\TopR(h_1)  \HasTy[\sigma] \Gamma \vdash \Delta, h_1\bn \top$}
\DisplayProof
\]

\begin{prooftree}
\AxiomC{$\pi_1 \HasTy[\sigma] \Gamma \vdash \Delta, h_1\bn \varphi\sigma$}
\AxiomC{$\pi_2 \HasTy[\sigma] h_2\bn \varphi\sigma, \Gamma \vdash \Delta$}
\BinaryInfC{$\Cut(\varphi, h_1\bn \pi_1, h_2\bn \pi_2) \HasTy[\sigma]
  \Gamma \vdash \Delta$}
\end{prooftree}

\[
  \AxiomC{$\pi \HasTy[\sigma] h_1\bn\neg\varphi, \Gamma \vdash \Delta, h_2\bn \varphi$}
  \UnaryInfC{$\NegL(h_1, h_2\bn \pi) \HasTy[\sigma]
  h_1\bn\neg\varphi, \Gamma \vdash \Delta$}
  \DisplayProof
  \quad\quad\quad
  \AxiomC{$\pi \HasTy[\sigma] h_2\bn\varphi, \Gamma \vdash \Delta,
  h_1\bn\neg\varphi$}
  \UnaryInfC{$\NegR(h_1, h_2\bn \pi) \HasTy[\sigma]
  \Gamma \vdash \Delta, h_1\bn\neg\varphi$}
  \DisplayProof
\]

\begin{prooftree}
\AxiomC{$\pi  \HasTy[\sigma]  h_3\bn \psi, h_2\bn \varphi, h_1\bn \varphi \land \psi, \Gamma \vdash \Delta$}
\UnaryInfC{$\AndL(h_1, h_2\bn h_3\bn \pi)  \HasTy[\sigma]  h_1\bn \varphi \land \psi, \Gamma \vdash \Delta$}
\end{prooftree}

\begin{prooftree}
  \AxiomC{$\pi_1 \HasTy[\sigma] \Gamma\vdash\Delta, h_1\bn
  \varphi\land\psi, h_2\bn \varphi$}
  \AxiomC{$\pi_2 \HasTy[\sigma] \Gamma\vdash\Delta, h_1\bn
  \varphi\land\psi, h_3\bn \psi$}
  \BinaryInfC{$\AndR(h_1, h_2\bn\pi_1, h_3\bn\pi_2) \HasTy[\sigma]
  \Gamma \vdash \Delta, h_1\bn \varphi\land\psi$}
\end{prooftree}

\begin{prooftree}
  \AxiomC{$\pi \HasTy[\sigma] h_2\bn \varphi(t\sigma), h_1\bn \forall x\:
    \varphi(x), \Gamma \vdash \Delta$}
  \UnaryInfC{$\AllL(h_1, t, h_2\bn \pi) \HasTy[\sigma]
      h_1\bn \forall x\: \varphi(x), \Gamma \vdash \Delta$}
\end{prooftree}

\begin{prooftree}
\AxiomC{$\pi  \HasTy[{[x\sbst y]\sigma}]
  \Gamma\vdash\Delta,
  h_1\bn \forall x\: \varphi(x),
  h_2\bn \varphi(y)$}
\RightLabel{($y$ fresh)}
\UnaryInfC{$\AllR(h_1, x, h_2\bn \pi)  \HasTy[\sigma]
  \Gamma\vdash\Delta, h_1\bn \forall x\: \varphi(x)$}
\end{prooftree}

\[
  \AxiomC{}
  \UnaryInfC{$\Rfl(h) \HasTy[\sigma] \Gamma\vdash\Delta, h\bn t=t$}
  \DisplayProof
  \quad\quad\quad
  \AxiomC{$\pi  \HasTy[\sigma]
    h_1\bn t=s, \Gamma \vdash \Delta,
    h_2\bn \varphi\sigma(t), h_3\bn \varphi\sigma(s)$}
  \UnaryInfC{$\Eql(h_1, h_2, \mathtt{true}, \varphi, h_3\bn \pi) \HasTy[\sigma]
    h_1\bn t=s, \Gamma \vdash \Delta,
    h_2\bn \varphi\sigma(t)$}
  \DisplayProof
\]

\begin{prooftree}
  \AxiomC{$\pi_1 \HasTy[\sigma]
  \Gamma \vdash \Delta, h_1\bn \varphi(t)\sigma, h_2\bn \varphi\sigma(0)$}
  \AxiomC{$\pi_2 \HasTy[{[x\sbst y]}\sigma]
    \Gamma, h_3\bn \varphi\sigma(y) \vdash \Delta,
    h_1\bn \varphi(t)\sigma, h_4\bn \varphi\sigma(s(y))$}
  \RightLabel{($y$ fresh)}
  \BinaryInfC{$\Ind(h_1, \varphi, t, h_2\bn \pi_1, x\bn h_3\bn h_4\bn \pi_2)
    \HasTy[\sigma] \Gamma \vdash \Delta, h_1\bn \varphi(t)\sigma$}
\end{prooftree}

  \caption{Typing rules for \LKt}
  \label{lkttyping}
\end{figure}

Let us now define the typing judgment.  A local context is a finite map
from hypotheses to formulas.  We write $h_1\bn \varphi \vdash h_2\bn
\psi$ as a suggestive notation for the map $\{h_1 \mapsto \varphi, h_2
\mapsto \psi\}$ where $h_1$ is negative and $h_2$ is positive.  Outer
occurrences overwrite inner ones, that is $\vdash h\bn \varphi, h\bn
\psi$ means $\{h \mapsto \psi\}$.

Given an (expression) substitution $\sigma$, we can apply it to a proof
term $\pi$ in the natural way to obtain~$\pi\sigma$.  The judgment $\pi
\HasTy[\sigma] S$ means that $\pi\sigma$ is a valid proof in the local
context $S$, that is, $\pi\sigma$ proves that the sequent corresponding to $S$
is valid.  We may omit $\sigma$ if it is the identity substitution, in
this special case $\pi \HasTy \Gamma \vdash \Delta$ corresponds to the
notation $\Gamma \triangleright \pi \triangleright \Delta$ used
in~\cite{Urban2001Strong}.

The reason for parameterizing the typing judgement by a substitution is
twofold: due to our use of named variables, we may need to rename
bound eigenvariables (in \AllR and \Ind) when
traversing a term.  However, we do not want to apply a substitution to the
proof term to ensure that the eigenvariable is fresh.  This would be
both costly and also introduces an unnecessary dependency on the local
context in operations that would otherwise not require any typing
information.

The proof terms and corresponding typing rules are chosen in such a way that
they correspond as much as possible to the already implemented sequent calculus
LK, see~\cite[Appendix B.1]{GAPT2018User} for a detailed description of
that calculus.  The implementation also
contains further inferences for special applications, such as proof links for
schematic proofs~\cite{Cerna2018System}, definition rules~\cite{Baaz2006Proof},
and Skolem inferences to represent Skolemized
proofs in higher-order logic~\cite{Hetzl2011CERES}.  The implemented inference rule for
induction is also more general than the one shown here: it supports
structural induction over other types than natural numbers.

Equational reasoning is implemented using the \Rfl inference for
reflexivity, and an \Eql inference to rewrite in arbitrary contexts and
on both sides of the sequent.  The third argument indicates whether we
rewrite from left-to-right or right-to-left.  Syntactically, we support
equations between terms of arbitrary type, however cut-elimination can
fail with equations between functions or Booleans as quantified cuts can
remain.

In our version of higher-order logic, the connectives $\lor, \to, \bot$,
and $\exists$ are also primitive.  By a heavy abuse of notation, we
simply reuse the proof terms for $\land, \top,$ and $\forall$.  This
representation causes no confusion, since the intended connective is
always clear from the polarities of the hypotheses, and many operations
are defined identically for the different connectives.  The corresponding
typing rules are derived in the natural way, as an example we show the
case where \AndL is used to prove an implication on the right side:
\vspace{-2ex}
\begin{prooftree}
\AxiomC{$\pi  \HasTy[\sigma]   h_2\bn \varphi, \Gamma \vdash \Delta,
h_1\bn \varphi\to\psi, h_3\bn \psi$}
\UnaryInfC{$\AndL(h_1, h_2\bn h_3\bn \pi)  \HasTy[\sigma]  \Gamma \vdash
\Delta, h_1\bn \varphi\to\psi$}
\end{prooftree}

\section{Cut-normalization}\label{secnorm}

\newcommand\evalCut{\ensuremath{\mathcal E}\xspace}
\newcommand\normalize{\ensuremath{\mathcal N}\xspace}
\newcommand\subst{\ensuremath{\mathcal S}\xspace}

Normalization is performed in a big-step evaluation approach using 3
mutually recursive functions \normalize, \evalCut, and
\subst\footnote{In the implementation these are called
\texttt{normalize}, \texttt{evalCut}, and \texttt{ProofSubst}, resp.}.
All of these functions return fully normalized proof terms.
We do not create temporary \Cut terms, all produced \Cut terms are
irreducible (for example because they are ``stuck'' on \Eql or \Ind).
\Cref{figlktnorm} shows the definition of the functions
$\normalize,\evalCut$, and $\subst$.  Note that since contraction is
implicit, the cut rule behaves more like Gentzen's mix
rule~\cite{Gentzen1935Untersuchungen}.

\begin{itemize}

  \item The function \normalize takes a proof term $\pi$ as input and
    returns a normal form $\normalize(\pi)$.

  \item If $\pi_1$ and $\pi_2$ are already in normal form, then
    $\evalCut(\varphi, h_1\bn \pi_1, h_2\bn \pi_2)$ computes a normal
    form of $\Cut(\varphi, h_1\bn \pi_1, h_2\bn \pi_2)$.

  \item Let $\pi_1$ and $\pi_2$ be again in normal form, then
    $\subst(\pi_1, \varphi, h_1 := h_2\bn \pi_2)$ performs a proof
    substitution, which corresponds to the rank-reduction step of
    cut-elimination in LK.  The function $\subst$ takes one side of the
    cut and directly moves it to all inferences in the other side
    where the cut formula occurs as the main formula.  This operation is
    symmetric in the side of the cut, and only needs to be implemented
    once.

\end{itemize}

Given a term $\pi$, we write $\normalize(\pi) \downarrow$ if
$\normalize$ terminates on the input $\pi$; similarly for \evalCut
and \subst.

\begin{lemma}[Subject reduction]
  \[
    \def\defaultHypSeparation{\hspace{3pt}}
    \AxiomC{$\pi \HasTy[\sigma] \Gamma\vdash\Delta$}
    \AxiomC{$\normalize(\pi)\downarrow$}
    \BinaryInfC{$\normalize(\pi) \HasTy[\sigma] \Gamma\vdash\Delta$}
    \DisplayProof
    \hspace{1cm}
    \def\defaultHypSeparation{\hspace{3pt}}
    \AxiomC{$\pi_1 \HasTy[\sigma] \Gamma \vdash \Delta, h_1\bn \varphi\sigma$}
    \AxiomC{$\pi_2 \HasTy[\sigma] h_2\bn \varphi\sigma, \Gamma \vdash \Delta$}
    \AxiomC{$\evalCut(\varphi, h_1\bn \pi_1, h_2\bn \pi_2) \downarrow$}
    \TrinaryInfC{$\evalCut(\varphi, h_1\bn \pi_1, h_2\bn \pi_2) \HasTy[\sigma]
      \Gamma \vdash \Delta$}
    \DisplayProof
  \]
  \[
    \AxiomC{$\pi_1 \HasTy[\sigma] \Gamma\vdash\Delta, h_1\bn \varphi\sigma$}
    \AxiomC{$\pi_2 \HasTy[\sigma] h_2\bn \varphi\sigma, \Gamma\vdash\Delta$}
    \AxiomC{$\subst(\pi_1, \varphi, h_1 := h_2\bn \pi_2)\downarrow$}
    \TrinaryInfC{$\subst(\pi_1, \varphi, h_1 := h_2\bn \pi_2)
    \HasTy[\sigma] \Gamma\vdash\Delta$}
    \DisplayProof
  \]
  \[
    \AxiomC{$\pi_1 \HasTy[\sigma] h_1\bn \varphi\sigma, \Gamma\vdash\Delta$}
    \AxiomC{$\pi_2 \HasTy[\sigma] \Gamma\vdash\Delta, h_2\bn \varphi\sigma$}
    \AxiomC{$\subst(\pi_1, \varphi, h_1 := h_2\bn \pi_2)\downarrow$}
    \TrinaryInfC{$\subst(\pi_1, \varphi, h_1 := h_2\bn \pi_2)
    \HasTy[\sigma] \Gamma\vdash\Delta$}
    \DisplayProof
  \]
\end{lemma}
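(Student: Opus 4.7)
I would prove all four statements by one simultaneous induction. Since each conclusion assumes that the corresponding function terminates on its input, the nested calls of \normalize, \evalCut, and \subst form a well-founded tree by hypothesis, and I induct on it: every recursive invocation appearing in the evaluation of the outer call is strictly smaller in this order and returns a proof term to which the induction hypothesis applies. The \normalize case is then straightforward case analysis on the outermost constructor of $\pi$: for every constructor other than \Cut, the function simply normalizes the subproofs and rebuilds the same inference, and subject reduction follows immediately from the induction hypothesis and the corresponding typing rule in \cref{lkttyping}; the \Cut case is handled by the statement for \evalCut.

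For \evalCut, the case analysis is on the heads of the normal forms $\pi_1$ and $\pi_2$ with respect to the two cut hypotheses. In the principal case, where the cut formula is main on both sides, the function performs the appropriate logical reduction (for instance, a cut on $\varphi \land \psi$ decomposes into cuts on the conjuncts, a cut on $\forall x\:\varphi(x)$ instantiates the eigenvariable by the witness term, and so on); the resulting cuts are formed from strictly smaller data, so well-typedness follows by inverting the original typing derivations and applying the typing rule for \Cut together with the induction hypothesis. In the non-principal cases \evalCut delegates to \subst. For \subst itself I would induct on $\pi_1$: if the cut hypothesis is not main at the head of $\pi_1$, the function descends into the subproofs and the conclusion follows from the induction hypothesis together with the typing rule for that inference; if the cut hypothesis is main, the sub-derivation is combined with $\pi_2$ via \evalCut (or, in the axiom-like cases, directly), again on strictly smaller data. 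Both polarity versions of the \subst statement are covered by the same case analysis, since the function is symmetric in the orientation of the cut.

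The main obstacle is not any single case but the administrative bookkeeping around binders. One has to keep careful track of how the local context, the hypothesis-shadowing convention, and the substitution $\sigma$ interact when descending under an \AllR or \Ind inference, and verify that the freshness condition on eigenvariables can always be met by extending $\sigma$ with a suitable renaming rather than by rewriting the term. This is precisely what the parameterization of $\HasTy[\sigma]$ by a substitution is designed to make painless, but it must still be handled uniformly at every recursive call, and in the principal \evalCut cases one must additionally check that the typing derivations of the two premises can be re-paired so as to share a common $\sigma$. Once this scaffolding is in place, each remaining case is a mechanical check against the rules of \cref{lkttyping}.
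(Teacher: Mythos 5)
Your proposal is correct and matches the paper's proof, which is stated in one line as a ``routine induction on the length of the computation of $\normalize$, $\evalCut$, $\subst$'': your simultaneous induction on the well-founded tree of recursive calls is exactly that induction, elaborated with the expected case analyses. The additional care you describe for binders, eigenvariable freshness, and the role of $\sigma$ is precisely the bookkeeping the paper dismisses as routine.
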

\begin{proof}
  Routine induction on the length of the computation of \normalize,
  \evalCut, \subst, resp.
\end{proof}

We expect that \normalize terminates on all well-typed \LKt proofs,
including higher-order quantifier inferences.  Urban and Bierman showed
strong normalization for their first-order calculus without
equality using reducibility methods~\cite{Urban2001Strong}.  \LKt is
more general as it is higher-order, and the first-order fragment is
slightly different due to the use of the skipping constructors
${\cdot}^?$, which skip unnecessary inferences.

\begin{conjecture}[Termination]
  Let $\pi \HasTy[\sigma] \Gamma\vdash\Delta$, then
  $\normalize(\pi)\downarrow$.
\end{conjecture}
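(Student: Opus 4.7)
My plan is to adapt the reducibility candidates argument of Urban and Bierman~\cite{Urban2001Strong} to \LKt. In outline: (i) define a family of reducibility candidates $R_\varphi$ indexed by formulas, (ii) prove that every well-typed proof lies in the appropriate intersection of candidates, and (iii) show that membership in a candidate implies termination of \normalize, \evalCut, and \subst on the term. The already-established subject reduction lemma then gives us what we need: a converging computation that preserves the end-sequent.

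For step (i) I would define $R_\varphi$ by induction on the structure of $\varphi$, following the Tait--Girard pattern. The atomic and equational cases would consist of all proof terms whose three mutually recursive evaluations terminate. The compound cases follow the two-sided analogue of the standard clauses; in particular $\pi \in R_\varphi$ would require that \subst of $\pi$ against any reducible partner proof of $\varphi$ on the opposite side of the sequent yields a terminating computation. For higher-order quantifiers $\forall x^\alpha\: \varphi(x)$ I would adopt Girard's impredicative clause, quantifying over all candidate interpretations of variables of type $\alpha$; this gives a uniform treatment of the higher-order fragment.

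Step (ii) then proceeds by induction on the typing derivation. The logical cases ($\AndL/\AndR$, $\NegL/\NegR$, $\AllL/\AllR$, $\Ax$, $\TopR$) follow closely the Urban--Bierman development, and for \Cut one invokes the interpretation clause of $R_\varphi$ to conclude termination of \evalCut and hence of the enclosing \normalize call. The novel cases are \Rfl, \Eql, and \Ind. For equality, the crucial observation is that on ``stuck'' cut formulas---equations between Booleans or functions---\evalCut is designed to return an irreducible \Cut term rather than loop, which is compatible with termination even though cut-elimination itself may fail in those cases. For \Ind the termination argument would split on the eigen-term $t$: when $t$ is a closed numeral one unfolds and measures by the value of $t$, and when $t$ is open the inference is simply stuck and contributes no further recursive calls.

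The main obstacle will be the interaction of impredicative quantification with the \Ind rule. The premises of \Ind can themselves contain cuts on arbitrarily complex, possibly higher-order quantified formulas, so the inductive definition of $R_\varphi$ must be stratified---or equivalently, the well-founded order on candidates used in the termination proof must be justified in the presence of both impredicativity and primitive recursion. I expect this needs a construction analogous to Girard's candidates for System~F rather than a direct transfer of the first-order argument. A secondary subtlety is the treatment of the skipping constructors $\cdot^?$ alluded to in the text: these short-circuit reductions in ways absent from the Urban--Bierman calculus, so the reduction measure used in the argument will have to be adjusted to account for them.
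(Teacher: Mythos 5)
You should first be clear about the status of this statement: it is labelled a \emph{conjecture} in the paper, there is no proof of it to compare against, and the future-work section explicitly states that it remains open whether normalization terminates for proofs with higher-order (or even just second-order) quantifier inferences. Your proposal is therefore not measured against a hidden argument but against the difficulty that made the authors leave the question open, and as written it does not overcome that difficulty. Steps (i)--(iii) reproduce the schema of Urban--Bierman~\cite{Urban2001Strong}, and your final paragraph concedes that the candidate construction in the presence of impredicativity and \Ind ``needs a construction analogous to Girard's'' --- but that construction is precisely what is missing, not a detail to be filled in. In higher-order logic the formula obtained by instantiating a quantifier over a predicate variable need not be structurally smaller than the quantified formula (and \Eql can rewrite a Boolean into an arbitrary formula), so defining $R_\varphi$ ``by induction on the structure of $\varphi$'' is simply not available. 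Nor can one directly adopt Girard's impredicative clause: in the classical, two-sided setting the natural clause ``$\pi \in R_\varphi$ iff \subst of $\pi$ against every reducible partner on the opposite side terminates'' is circular between the two sides of the sequent, which is exactly why Barbanera--Berardi~\cite{Barbanera1996Symmetric} and Urban--Bierman need a fixed-point (``symmetric candidates'') construction rather than induction on $\varphi$. Whether that symmetric fixed point can be combined with impredicative quantification over candidates is the open problem; a plan that defers it is a research program, not a proof.

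There are also two concrete mismatches between your proposal and the algorithm whose termination is claimed. First, reducibility methods establish strong normalization of a reduction \emph{relation}, but the paper points out that the natural single-step relation for \LKt is \emph{not} strongly normalizing: two cuts stuck on the same \Eql inference commute with each other forever. So step (iii) must be formulated for the specific big-step strategy of \normalize, \evalCut, and \subst --- including the re-entry point where \subst, upon reaching a main-formula occurrence, calls \evalCut with the two sides swapped --- and your candidates must be closed under exactly this call structure rather than under arbitrary reductions; ``membership implies termination of \normalize'' cannot be obtained by quoting SN of a relation that is in fact not SN. Second, your \Ind case addresses the wrong procedure: \normalize never unfolds induction inferences. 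Unfolding numerals is a separate pass (\cref{secnormind}) that alternates with cut-normalization and carries its own, separately stated, termination conjecture. Within \normalize, an \Ind inference matters only as a producer of stuck cuts, so your measure ``by the value of $t$'' is vacuous for the statement at hand; this actually makes the case easier than you anticipate, but it shows the proposal is not yet aligned with the definition of the functions it is about.
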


Note that for our applications it is often not necessary to have
completely cut-free proofs.  Cuts on quantifier-free formulas are
for example unproblematic for the extraction of Herbrand disjunctions.

\begin{lemma}[Cut-elimination]
  Let $\pi \HasTy[\sigma] \Gamma\vdash\Delta$ such that
  $\normalize(\pi)\downarrow$.  If $\pi$ does not contain \Rfl, \Eql, or
  \Ind, then $\normalize(\pi)$ is cut-free.
\end{lemma}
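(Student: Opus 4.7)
The plan is to strengthen the statement to a simultaneous induction on the combined length of the three recursive computations. The strengthened claim adds two pieces: that \normalize, \evalCut, and \subst each preserve the property of containing no \Rfl, \Eql, or \Ind, and that they each produce cut-free output provided all their proof-term arguments already satisfy that property (and, for \evalCut and \subst, are already cut-free). The original lemma is then the \normalize case applied to the given $\pi$.

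The first step is preservation. Inspecting each clause of the three functions in Figure~\ref{figlktnorm}, one verifies that no clause introduces a fresh \Rfl, \Eql, or \Ind constructor: such nodes only appear in the output if they were present in the input. This gives the preservation invariant throughout the induction with essentially no work.

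The second step is the key cut-freeness argument, by case analysis. For \normalize, the only clause that could emit a fresh \Cut is the one for $\pi = \Cut(\varphi, h_1\bn \pi_1', h_2\bn \pi_2')$, which recursively normalizes the subproofs and then delegates to \evalCut; the induction hypothesis on the subproofs together with cut-freeness of \evalCut's output closes this case. For \evalCut, one performs a case analysis on the head inferences of its two proof arguments. In each case, \evalCut either (a) reduces to recursive calls on strictly smaller cut formulas (logical key-reduction, e.g., the \AndR/\AndL pairing, or \AllR/\AllL with an instance substitution), or (b) pushes the cut inward via \subst, whose output is cut-free by the induction hypothesis, or (c) is stuck because one side is headed by \Eql or \Ind---and this last case is precisely excluded by the hypothesis of the lemma. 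The argument for \subst is analogous: it traverses the proof, recurses at inferences where the cut formula is not main, and invokes \evalCut at those where it is.

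The main obstacle will be the case analysis on \evalCut: one must check that the pairings of head inferences enumerated in Figure~\ref{figlktnorm} are exhaustive once \Eql and \Ind are ruled out on both sides, and that every non-excluded pairing either reduces the cut formula or is dispatched through \subst so that no new \Cut is wrapped around the result. It is worth verifying explicitly that the higher-order \AllR/\AllL pairing behaves exactly like its first-order counterpart in this respect; this works because the expression substitution performed at the \AllL step acts on object-language terms and cannot manufacture \Rfl, \Eql, or \Ind nodes inside the proof, so the preservation invariant is maintained into the recursive call.
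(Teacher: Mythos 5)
Your proposal is correct and takes essentially the same route as the paper's own proof, which simply observes that \Cut terms are only produced by \evalCut (in its final ``otherwise'' clause) and that a case analysis over its clauses shows this clause is unreachable for well-typed proofs without \Rfl, \Eql, or \Ind. Your mutual induction on the length of the computation, together with the preservation invariant, is just an explicit formalization of the scaffolding that the paper's one-line case analysis leaves implicit.
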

\begin{proof}
  Cuts are only produced by \evalCut, and by case analysis
  this does not happen in this class.
\end{proof}

\begin{figure}
  \begin{align*}
    \normalize(\Cut(\varphi,h_1\bn\varphi, h_2\bn\psi)) &=
    \evalCut(\varphi, h_1\bn\normalize(\varphi), h_2\bn\normalize(\psi)) \\
    \normalize(\Ax(h_1,h_2)) &= \Ax(h_1,h_2) \\
    \normalize(\NegL(h_1,h_2\bn\pi)) &= \NegL^?(h_1,h_2\bn\normalize(\pi))
    \\\vdots & \quad\mbox{(other cases recurse analogously)}
  \end{align*}
  \begin{align*}
    \mbox{(if $h_1$ is not free in $\pi_1$:)}\quad
    \evalCut(\varphi, h_1\bn \pi_1, h_2\bn \pi_2) &= \pi_1 \\
    \mbox{(if $h_2$ is not free in $\pi_2$:)}\quad
    \evalCut(\varphi, h_1\bn \pi_1, h_2\bn \pi_2) &= \pi_2 \\
    \evalCut(\varphi, h_1\bn \Ax(h_2,h_1), h_3\bn \pi) &= \pi[h_3\sbst h_2] \\
    \evalCut(\varphi, h_1\bn \pi, h_2\bn \Ax(h_2,h_3)) &= \pi[h_1\sbst h_3] \\
    \evalCut(\neg\varphi, h_1\bn \NegR(h_1, h_2\bn \pi_1), h_3\bn \NegL(h_3, h_4\bn \pi_2)) &=
      \evalCut(\varphi, h_4\bn \pi_2', h_2\bn \pi_1') \\
    \evalCut(\varphi\land\psi,
    h_1\bn \AndR(h_1, h_2\bn \pi_1, h_3\bn \pi_2),
    h_4\bn \AndL(h_4, h_5\bn h_6\bn \pi_3)) &=
    \evalCut(\psi, h_3\bn \pi_2', h_6\bn \subst(\pi_3', \varphi, h_5 :=
    h_2\bn \pi_1')) \\
    \evalCut(\varphi \varoast \psi,
    h_1\bn \AndL(h_1, h_2\bn h_3\bn \pi_1),
    h_4\bn \AndR(h_4, h_5\bn \pi_2, h_6\bn \pi_3)) &=
    \evalCut(\psi,
    h_2\bn \subst(\pi_1', \varphi, h_3 := h_5\bn \pi_3'),
    h_5\bn \pi_2') \\
    \evalCut(\forall x\: \varphi(x),
    h_1\bn \AllR(h_1, y, h_2\bn \pi_1),
    h_3\bn \AllL(h_3, t, h_4\bn \pi_2)) &=
    \evalCut(\varphi(t),
    h_2\bn \pi_1[y\sbst t]',
    h_3\bn \pi_2') \\
    \evalCut(\exists x\: \varphi(x),
    h_1\bn \AllL(h_1, t, h_2\bn \pi_1),
    h_3\bn \AllR(h_3, y, h_4\bn \pi_2)) &=
    \evalCut(\varphi(t),
    h_2\bn \pi_1',
    h_4\bn \pi_2[y\sbst t]') \\
    \mbox{(if $h_1$ is not a main formula of $\pi_1$:)}\quad
    \evalCut(\varphi, h_1\bn \pi_1, h_2\bn \pi_2) &=
    \subst(\pi_1, \varphi, h_1 := h_2\bn \pi_2) \\
    \mbox{(if $h_2$ is not a main formula of $\pi_2$:)}\quad
    \evalCut(\varphi, h_1\bn \pi_1, h_2\bn \pi_2) &=
    \subst(\pi_2, \varphi, h_2 := h_1\bn \pi_1) \\
    \mbox{(otherwise:)}\quad
    \evalCut(\varphi, h_1\bn \pi_1, h_2\bn \pi_2) &=
    \Cut^?(\varphi, h_1\bn \pi_1, h_2\bn \pi_2)
  \end{align*}
  \begin{align*}
    \subst(\Ax(h_1,h_2), \varphi, h_1 := h_3\bn \pi) &=
    \pi[h_3 \sbst h_2] \\
    \subst(\Ax(h_1,h_2), \varphi, h_2 := h_3\bn \pi) &=
    \pi[h_3 \sbst h_1] \\
    \subst(\NegL(h_1, h_2\bn \pi_1), \varphi, h_3 := h_4\bn \pi_2) &=
    \NegL^?(h_1, h_2\bn \subst(\pi_1, \varphi, h_3 := h_4\bn \pi_2))
    \mbox{\quad (if $h_1 \neq h_3$)} \\
    \vdots & \quad\mbox{(other cases recurse analogously)}\\
    \subst(\pi_1, \varphi, h_1 := h_2\bn \pi_2) &=
    \evalCut(\varphi, h_1\bn \pi_1, h_2\bn \pi_2)
    \mbox{\quad (otherwise, if $h_1 < 0$)} \\
    \subst(\pi_1, \varphi, h_1 := h_2\bn \pi_2) &=
    \evalCut(\varphi, h_2\bn \pi_2, h_1\bn \pi_1)
    \mbox{\quad (otherwise, if $h_1 > 0$)}
  \end{align*}
  \begin{align*}
    \NegL^?(h_1, h_2\bn\pi) &= \pi \quad\mbox{(if $h_2$ not free in $\pi$)}
    \\
    \NegL^?(h_1, h_2\bn\pi) &= \NegL(h_1, h_2\bn \pi) \quad\mbox{(otherwise)}
    \\
    \vdots & \quad\mbox{(other cases analogously)}\\
  \end{align*}
  \caption{Evaluator for \LKt, where $\varoast \in \{{\lor},{\to}\}$.
    For reasons of space, we use the
  abbreviation $\pi_2'$ to abbreviate the proof substitution which
substitutes the opposite part of the cut for the main formula: for example, $\pi_2'$
abbreviates $\subst(\pi_2, \varphi, h_2 := h_1\bn \pi_1)$ in the case of
$\evalCut(\varphi, h_1\bn \pi_1, h_2\bn \NegL(h_2, h_3\bn \pi_2))$.}
\label{figlktnorm}
\end{figure}

We perform a few noteworthy optimizations:
\begin{itemize}

  \item Every term stores the set of its free hypotheses and free
    (expression) variables.  These are fields in the Scala classes
    implementing the proof terms.  We can hence efficiently (in
    logarithmic time) check whether a given hypothesis or variable is
    free in a proof term.

  \item Due to this extra data, we can effectively skip many calls of the
    normalization procedure.  We do not need to substitute or evaluate
    cuts if the hypothesis for the cut formula is not free in the
    subterm, in this case we can immediately return the subterm.

  \item When producing the resulting proof terms, we check whether we
    can skip any inferences.  For example, instead of $\NegL(h_1, h_2\bn
    \pi)$ we can directly return $\pi$ if $h_2$ is not free in $\pi$.
    In \cref{figlktnorm} we denote these ``skipping'' constructors with
    the ${\cdot}^?$ superscript.  This optimization is extremely important
    from a practical point of view, since it effectively prevents a
    common blow-up in proof size.

\end{itemize}

The cut-normalization in~\cite{Urban2001Strong} is presented as a
single-step reduction relation.  The strong normalization of that
relation depends on the fact that all cuts can be
eliminated in their calculus.  In \LKt however, cuts can be
irreducible---for example because they are stuck on an induction or on
\Eql.  This has the unfortunate consequence that the natural single-step
reduction relation for \LKt is not strongly normalizing.  Since multiple
cuts can be stuck on the same inference we have the traditional
counterexample of two commuting cuts, where for example
$\pi_3=\Eql(h_2,h_4,\mathtt{true}, \dots)$:
\[
  \Cut(\varphi, h_1\bn \pi_1, h_2\bn \Cut(\psi, h_3\bn \pi_2, h_4\bn \pi_3)) \:\mapsto\:
  \Cut(\psi, h_3\bn \pi_2, h_4\bn \Cut(\varphi, h_1\bn \pi_1, h_4\bn \pi_3)) \:\mapsto\: \dots
\]

\subsection{Induction unfolding}\label{secnormind}

We typically consider proofs with induction of sequents such as for
example $\forall x\: x+0=x, \forall x \forall y\: x+s(y)=s(x+y) \vdash
\varphi$ where $\varphi$ is quantifier-free (or maybe existentially
quantified), and the antecedent contains recursive definitions for all
contained function symbols such as (but not limited to) ${+}, {*}$, etc.
If $\varphi$ contains free variables or strong quantifiers, then we can
in general not eliminate all inductions---however the quantifier
instances of a normalized proof may still provide valuable insights.  In
particular we are interested in the quantifier instances of formulas in
the antecedent, as their structure plays an important role in our
approach to inductive theorem proving~\cite{Eberhard2015Inductive}.  The
language always contains the constructors $0$ and $s$.  Injectivity of
these constructors is included as an explicit formula in the antecedent
when necessary.  We consider arbitrary recursively defined functions,
also on other data types such as lists.

Elimination of induction inferences is handled in a similar way to
Gentzen's proof of the consistency of Peano Arithmetic~\cite{
Gentzen1936Widerspruchsfreiheit}.  Induction inferences whose terms are
constructor applications are unfolded:
\begin{align*}
  \Ind(h_1,\varphi,0,h_2\bn\pi_1,x\bn h_3\bn h_4\bn \pi_2)
  &\:\mapsto\:
  \pi_1[h_2\sbst h_1]
  \\
  \Ind(h_1,\varphi,s(t),h_2\bn\pi_1,x\bn h_3\bn h_4\bn \pi_2)
  &\:\mapsto\:
  \Cut(\varphi(t),
  h_1\bn \Ind(h_1,\varphi,t,h_2\bn\pi_1,x\bn h_3\bn h_4\bn \pi_2),
  h_3\bn \pi_2[x\sbst t][h_4\sbst h_1])
\end{align*}

The full induction-elimination procedure then alternates between
cut-normalization and full induction unfolding until we can no longer
unfold any induction inferences.  We also rewrite the
term $t$ in the induction inference using the universally quantified
equations representing the recursive definitions to bring the
term into constructor form.  The generated proof $\pi_s \HasTy \Gamma, h_5\bn
\varphi(t') \vdash h_4\bn \varphi(t)$ is then added via a cut, where
$t'$ is the simplified term which is now in constructor form:
\[
  \Ind(h_1,\varphi,t,\dots)
  \:\mapsto\:
  \Cut(\varphi(t'), h_4\bn \Ind(h_1,\varphi,t',\dots),
  h_5\bn \pi_s)
\]

We can perform this induction reduction even if the problem contains
function symbols that are not recursively defined.  In this case
inductions can remain in the output.  We conjecture that the full
induction-elimination procedure (alternating induction-unfolding and
cut-normalization) always terminates.

\subsection{Equational reduction}\label{secnormeql}

As noted in \cref{secnorm}, cuts on equational inferences are stuck.
Consider for example the following term, which cannot be reduced further:
\( \Cut(\forall x\: P(x,0), h_1\bn \Eql(h_1, h_2,
\mathtt{true}, \lambda y\: \forall x\: P(x,y), h_3\bn \pi_1), h_4\bn
\pi_2) \)

\newcommand\simEq{\ensuremath{\mathsf{sim}}\xspace}

This is clearly a problem since we cannot obtain Herbrand disjunctions
from proofs with such quantified cuts.  On the other hand, cuts on atoms
would pose no problem since we can still obtain Herbrand disjunctions by
examining the weak quantifier inferences.  We hence reduce quantified
equational inferences to
atomic equational inferences---then only atomic cuts can be stuck.

Concretely, we define a function $\simEq$ such that $\simEq(\varphi,
h_1, h_2, h_3) \HasTy[] h_1\bn l=r, h_2\bn \varphi(l) \vdash h_3\bn
\varphi(r)$ for any terms $l$ and $r$, where $\simEq$ only uses \Eql
inferences on atoms.  This function hence simulates \Eql inferences
using only \Eql inferences on atoms, and is straightforwardly defined by
recursion on $\varphi$.  We only show the case for conjunction as an
example:
\[ \simEq(\lambda x\: (\varphi(x) \land \psi(x)), h_1, h_2, h_3) =
  \AndL(h_2, h_4\bn h_5\bn \AndR(h_3, h_6\bn \simEq(\varphi, h_1, h_4,
h_6), h_7\bn \simEq(\psi, h_1, h_5, h_7))) \]

We then replace \Eql inferences on non-atoms using the translation $\Eql(h_1, h_2,
\mathtt{true}, \varphi, h_3\bn \pi) \mapsto \Cut(\varphi(r), h_4\bn
\simEq(\varphi, h_1, h_2, h_4), h_3\bn \pi)$.  Note that this
translation depends on the typing derivation (to obtain the term $r$ for
the cut formula) and can fail if we have equations between predicates.

\section{Empirical evaluation}\label{seceval}

\subsection{Artificial examples}\label{secevalartif}

\begin{figure}
  \includegraphics[bb=0 0 453.67775pt 590.33672pt]{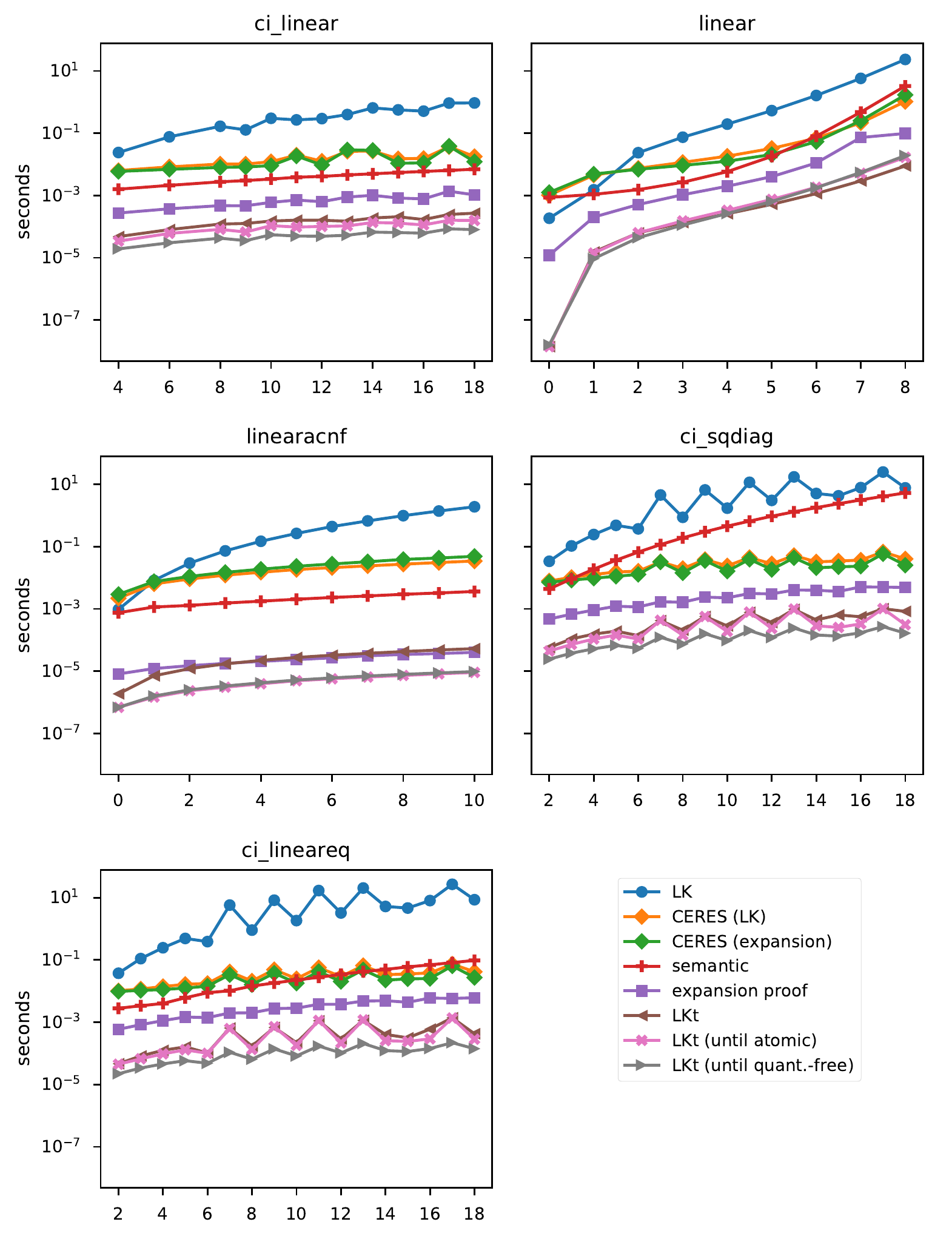}
  \caption{Runtime of cut-elimination procedures implemented in GAPT on
    several artificial examples.  (Some of the lines overlap: e.g. in
    \texttt{linearacnf}, the runtimes for \emph{LKt (until atomic)} and
    \emph{LKt (until quant.-free)} are almost identical.)}
  \label{figartif}
\end{figure}

The calculus and normalization procedure presented in this paper has
been implemented in the open source GAPT system\footnote{available at
\url{https://logic.at/gapt}} for proof
transformations~\cite{Ebner2016System}, version 2.10.  We now compare
the performance of several cut-normalization procedures implemented in
GAPT on benchmarks used in~\cite{Leitsch2018Extraction}.

\begin{itemize}

  \item \emph{LK:} Gentzen-style reductive cut-elimination in LK.  The
    proofs in LK are tree-like data structure where every node has a
    (formula) sequent.  The output is again a proof in LK, atomic cuts
    can appear directly below equational inferences.

  \item \emph{CERES (LK):} Cut-elimination by
    resolution~\cite{Baaz2000Cut} reduces the problem of cut-elimination
    in LK to finding a resolution refutation of a first-order clause
    set.  The output is a proof in LK with at most atomic cuts.

  \item \emph{CERES (expansion):} a variant of CERES that takes proofs
    with cuts in LK, and directly produces expansion
    proofs~\cite{Leitsch2018Extraction}.  This uses the same
    first-order clause sets as \emph{CERES (LK)}.

  \item \emph{semantic:} by ``semantic cut-elimination'', we refer to
    the procedure that throws away the input proofs, and generates a
    cut-free proof from scratch.  GAPT contains interfaces to several resolution
    provers, including the built-in Escargot prover.  Here we used
    Escargot to obtain a cut-free expansion proof of the end-sequent of
    the input proof.

  \item \emph{expansion proof:} the expansion proofs implemented in GAPT
    support cuts---such cuts corresponds to cuts in LK and are simply
    expansions of the formula
    $\forall X\: (X \to X)$.  First-order cuts in expansion proofs can
    eliminated using a procedure described in~\cite{Hetzl2013Expansion},
    which operates just on the quantifier instances of the proof,
    and is similar to the proofs of the epsilon
    theorems~\cite{Hilbert1939Grundlagen}.  Both the input and output
    formats are expansion proofs, the resulting expansion proof is
    cut-free.

  \item \emph{LKt:} the normalization procedure shown in \cref{secnorm}.

  \item \emph{LKt (until atomic):} same as \emph{LKt}, but we do not
    reduce atomic cuts.  The resulting proof may still contain cuts on
    atoms, but this is sufficient for the extraction of Herbrand
    disjunctions.  We can directly extract Herbrand disjunctions from
    proofs as long as all cut formulas are propositional.

  \item \emph{LKt (until quant.-free):} same as \emph{LKt}, but we do
    not reduce quantifier-free cuts.

\end{itemize}

The graphs in \cref{figartif} show the runtime for each of these
procedures on several artificial example proofs.  The runtime is measured in
seconds of wall clock time; we used a logarithmic scale for the time
since the performance of the procedures differs by several orders of
magnitude.  In one case, \emph{LKt (until quant.-free)} is 1000000 times
faster than \emph{LK}.  All of the example proofs are parameterized by a
natural number $n \geq 0$ (the x-axis of the plot), the size of the
input proofs is polynomially bounded in $n$.

\paragraph{Linear example after cut-introduction (\texttt{ci\_linear})}

The name ``linear example'' refers to the sequence of (proofs of) the
sequent $P(0), \forall x\: (P(x) \to P(s(x))) \vdash P(s^n(0))$.  We
take natural cut-free proofs of this sequent and then use an automated
method that introduces universally quantified
cuts~\cite{Ebner2018generation} to obtain a proof with cut.  In GAPT, these
proofs with universally quantified cuts are produced with
\verb|CutIntroduction(LinearExampleProof(n)).get|.

In this example, all of the \LKt normalization procedures are faster
than the CERES variants by a factor of about 100x.  Even semantic
cut-elimination is faster.  \LKt normalization is also faster than
expansion proof cut-elimination by a factor of about 10x.  We also see
that not eliminating atomic cuts is a bit faster than full
cut-elimination, and not eliminating quantifier-free cuts is even
faster.

\paragraph{Linear example proof with manual cuts (\texttt{linear})}

Cut-introduction often produces unnecessarily complicated lemmas,
resulting in irregularity when used in proof sequences.  It is also
limited to small proofs. To produce a more regular sequence and obtain
larger proofs, we manually formalized natural proofs of the linear
example for $2^n$ using $n-1$ cuts with the cut formulas $\forall x\:
(P(x) \to P(s^{2^k}(x)))$ for $1 \leq k < n$.  These proofs can be
obtained with \verb|LinearCutExampleProof(n)|.  (Note that this sequence
of proofs produces exponentially larger cut-free proofs than the other
sequences.)

The results are similar to the proofs obtained with cut-introduction,
although we observe new phenomena at both ends of the sequence: for
$n=0$, the proofs consist of a single axiom.  Here, the \LKt-based
procedures produce a cut-free proof in about 15 nanoseconds.  On the
other end, at $n \geq 6$, we finally see CERES becoming slightly faster
than semantic cut-elimination.

\paragraph{Linear example proof with atomic cuts (\texttt{linearacnf})}

To complete the discussion of the linear example, we also consider
a proof sequence in atomic cut-normal form (ACNF).  In these proofs, the
quantifier and propositional inferences are on the top of the proof, and
the bottom part consists only of atomic cuts---very much like a ground
resolution refutation.  Interestingly, atomic cut-elimination is
surprisingly cheap in this example: the \LKt-based normalization only
takes 10 microseconds.  On the other hand, the CERES-based methods
require as much time as they do for the proofs with universally
quantified cuts: they refute a clause set whose size is linear in $n$.

\paragraph{Square diagonal proof after cut-introduction
(\texttt{ci\_sqdiag})}

Just as in the linear example, we take cut-free proofs of
$P(0,0), \forall x \forall y\: (P(x,y) \to P(s(x), y)), \forall x
\forall y\: (P(x,y) \to P(x,s(y))) \vdash P(s^n(0), s^n(0))$ and then
automatically introduce universally quantified cuts.
\LKt
normalization until quantifier-free cuts is an order of magnitude faster
than expansion proof cut-elimination, and two orders of magnitude faster
than CERES.

\paragraph{Linear equality example proof after cut-introduction
(\texttt{ci\_lineareq})}

These proofs are generated using
\texttt{CutIntroduction(LinearEqExampleProof(n))}. Note that we replaced
the equality predicate by a binary E relation to prevent accidental
introduction of equational inferences. Again, \LKt normalization until
propositional cuts is 10x faster than expansion proof cut elimination,
which is 10x faster than CERES.

The astute reader will have noticed the spikes in the runtime of the
reductive cut-elimination procedures at $n \in \{7,9,11,13,17\}$. These
spikes are due to convoluted cut formulas produced by cut-introduction.
For example at $n=17$, the cut formula is $\forall x\: ((E(f(x), a) \to
E(f^3(x), a)) \land (E(x, a) \to E(f(x), a)))$ and we use it to prove
$E(a,a) \to E(f^{17}(a), a)$---this proof is almost as complicated on
the propositional level as the cut-free proof, even though it has a
lower quantifier complexity.

\subsection{Mathematical proofs}

\begin{figure}[t]
  \includegraphics[scale=0.97, bb=0 0 453.67775pt 372.08134pt]{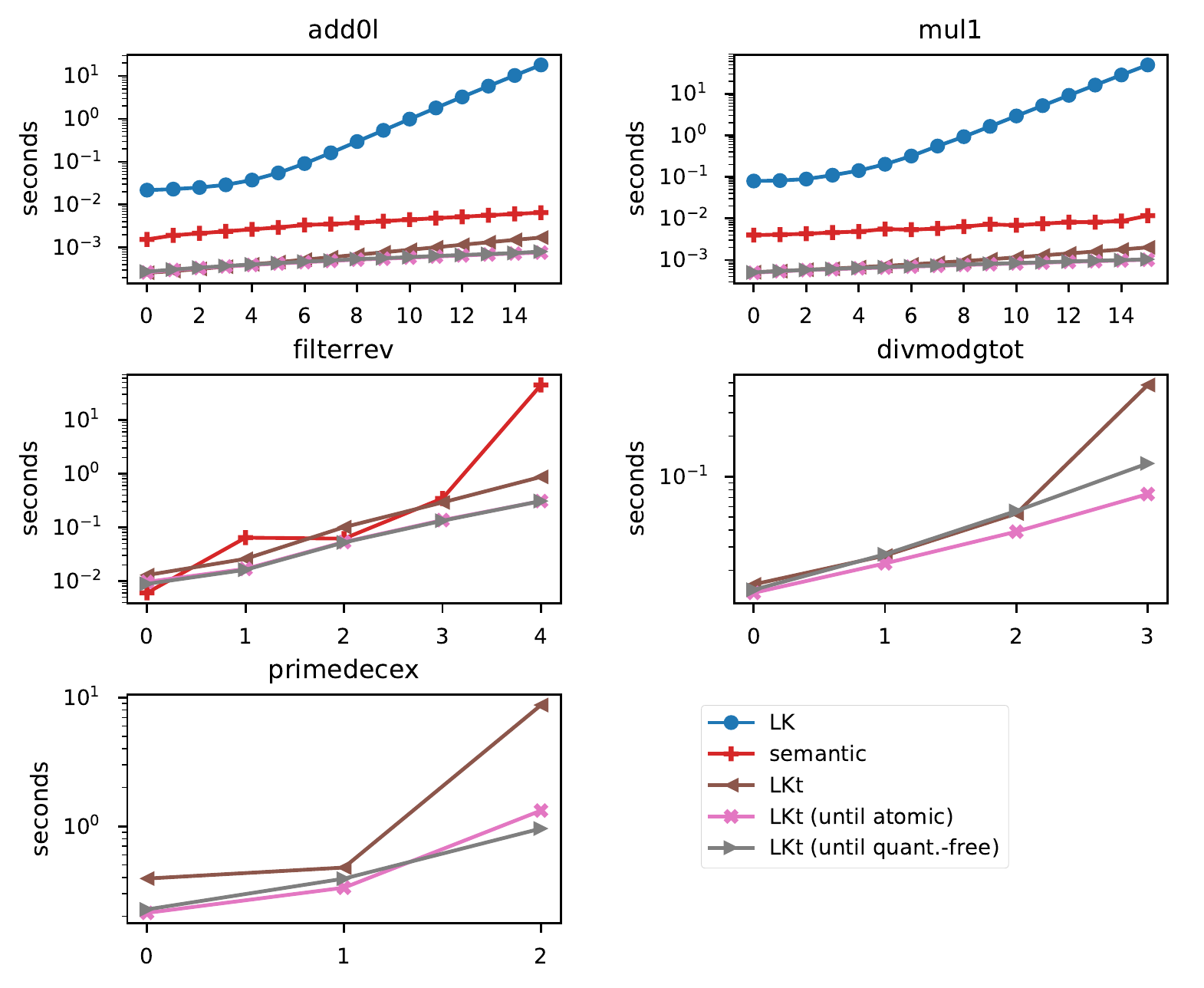}
  \caption{Runtime of induction-elimination procedures on
  lemmas formalized in GAPT's library.  (Results are omitted for some
  methods due to excessive runtime.)}
  \label{figmath}
\end{figure}

GAPT contains a small library of formalized proofs for testing.  These
are mostly basic properties of natural numbers and lists.  The biggest
formalized result is the fundamental theorem of arithmetic, showing the
existence and uniqueness of prime decomposition for natural numbers.
We evaluated the performance of \LKt as well as other procedures (see
\cref{secevalartif}) on several of these proofs.
\Cref{figmath} shows the runtime of the induction- and cut-elimination.
We tested instances of proofs of the following statements:

\vspace{1em}
\begin{tabular}{rl}
  \texttt{add0l} & $\forall x\: (0+x = x)$ \\
  \texttt{mul1} & $\forall x\: (x*1 = x)$ \\
  \texttt{filterrev} & $\forall p\: \forall l\:
  (\mathrm{filter}(p, \mathrm{rev}(l)) =
  \mathrm{rev}(\mathrm{filter}(p,l)))$ \\
  \texttt{divmodgtot} & $\forall a\: \forall b\:
  (b \neq 0 \to \exists d\: \exists r\: (r<b \land d*b+r=a))$ \\
  \texttt{primedecex} & $\forall n\:
  (n \neq 0 \to \exists d\: \mathrm{primedec}(d,n))$
\end{tabular}
\vspace{1em}

The proofs contain the primitive recursive definitions in the
antecedent.  For example, \texttt{add0l} is a proof with induction of
the sequent
\(
  \forall x\: (x+0=x),\: \forall x\: \forall y\: (x+s(y)=s(x+y))
  \:\vdash\:
  \forall x\: (0+x = x)
  \).
As before, induction-elimination in LK is several orders of magnitude
slower than in \LKt.  Semantic cut-elimination is surprisingly fast, it
is as fast as \LKt for small instances of \texttt{filterrev}.

\subsection{Furstenberg proof}\label{secfurstenberg}

\begin{figure}[b]
  \includegraphics[bb=0 0 394.79054pt 156.57057pt]{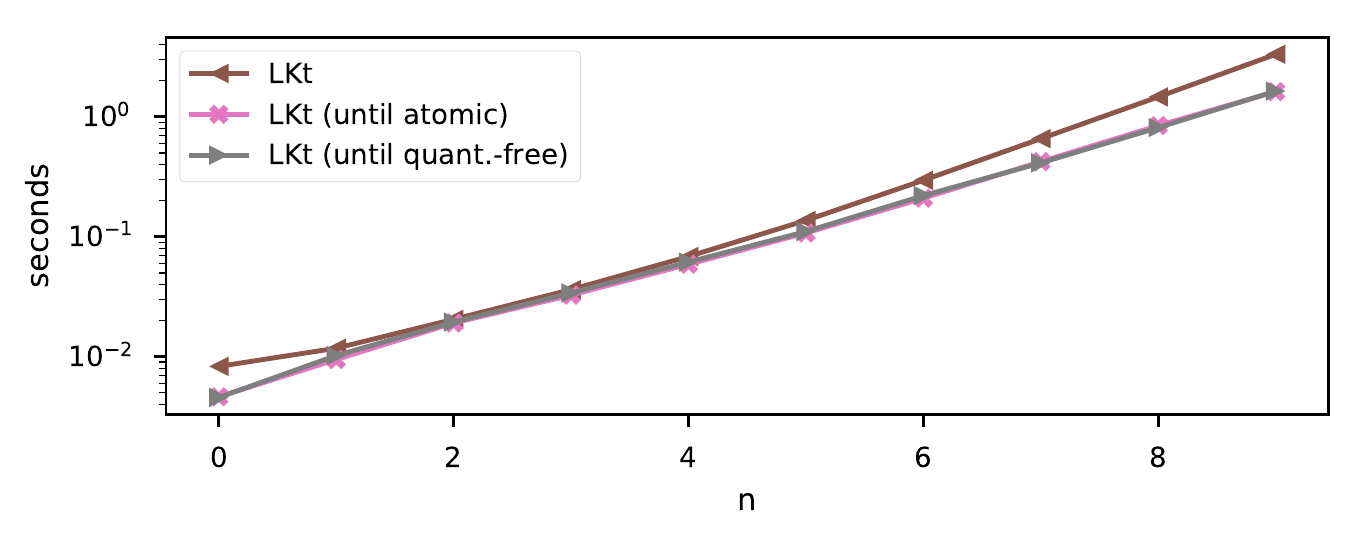}
  \vspace{-1em}
  \caption{Normalization runtime on a formalization of
  Furstenberg's proof of the infinitude of primes.}
  \label{figfurstenberg}
\end{figure}

Furstenberg's well-known proof of the infinitude of
primes~\cite{Furstenberg1955infinitude} equips the integers
with a topology generated by arithmetic progressions, and uses this
machinery to show that there are infinitely many primes.  For every
natural number $n \in \mathbb N$ we hence get a second-order proof
$\pi_n$ showing that there are more than $n$ prime numbers.  Cut-elimination of
$\pi_n$ then extracts the computational content of Furstenberg's
argument: we get a new prime number as a witness.

CERES was used to perform this extraction manually~\cite{Baaz2008CERES}.
The key step in cut-elimination using CERES consists of the refutation
of a so-called characteristic clause set.  In the case of Furstenberg's
proof automated theorem provers could only refute this first-order
clause set for $n=0$, that is, to show that there is more than one prime
number.  The authors hence manually constructed a sequence of
refutations, taking Euclid's proof of the infinitude of primes as a
guideline to obtain a prime divisor of $1 + p_0 \dots p_n$ as a witness.
The authors also present another refutation for $n=2$, which yields more
than one witness: one of the numbers $p_0+1$, $p_1+1$, or $5$ contains
the third prime as a divisor.

Using \LKt, GAPT can now perform the cut-elimination and extract the
witness term automatically.  \Cref{figfurstenberg} shows the performance
of the \LKt~normalization on instances of Furstenberg's proof.  The
concrete formalization closely resembles the one described
in~\cite{Baaz2008CERES}, however there have been minor changes to
account for subtle differences in the LK calculus currently implemented
in GAPT.  Now that we could
cut-eliminate this particular formalization for the first time, we were
excited to find an interesting feature.  We expected that the
cut-elimination of Furstenberg's proof would compute the same witness as
Euclid's proof: a prime divisor of $p_0 \cdots p_n + 1$.  However we got
the following witness instead, which contains $2$ as an additional
factor:
\( \mathtt{primediv\_of}(1 + 2 * p(0) * \cdots * p(n)) \)

This constant factor seems to depend on the concrete way in which we
formalize the lemma that nonempty open sets are infinite (this lemma is
called $\varphi_2$ in~\cite{Baaz2008CERES}).  With a slightly different
quantifier instance there, we can also get $3$ instead of $2$.

\section{Future work}\label{secfuture}

As our focus here lies in the practical applications of cut-elimination,
termination of the \LKt normalization procedure is only of secondary
concern.  For an actual implementation, there is little difference
between an algorithm that does not terminate or one that terminates
after a thousand years---as long as it quickly terminates on the
instances we apply it to.  For some classes of proofs, it is
straightforward to see that normalization indeed always terminates.  Due
to the direct correspondence with the traditional presentation of LK, we
can reuse termination arguments.  Whenever we observe non-termination in
the \LKt normalization, we get a corresponding non-terminating reduction
sequence in LK with an uppermost-first strategy.  We believe that
induction unfolding can be shown to terminate via a similar
argument as used in~\cite{Takeuti1987Proof} for the proof of the
consistency of Peano Arithmetic.  It remains open whether normalization
terminates for proofs with higher-order (or even just second-order)
quantifier inferences.

The current handling of equational and induction inferences as described
in \cref{secnormeql,secnormind} is unsatisfactory as they are not
integrated in the main normalization function but require a separate
pass over the proof.  Furthermore, the normalized proof may contain \Eql
inferences on atoms.  We are not aware of any terminating procedure
using local rewrite rules that eliminates unary equational inferences
such as the ones used in \LKt.

Renaming hypotheses and applying expression substitutions incurs a
significant cost in the benchmarks.  An obvious solution is to introduce
an explicit substitution inference to implement these operations without
the need to traverse the proof term.  In fact, one of the motivations
behind the substitution parameter $\sigma$ in the typing judgment
$\HasTy[\sigma]$ was the support for explicit substitution inferences.

As a cheap optimization, we could grade-reduce blocks of quantifier
inferences in a single substitution.  This should speed up the common
case of eliminating lemmas with many universal quantifiers.  Another
possible optimization is the use of caching: all of the functions
$\normalize, \evalCut,$ and $\subst$ are pure, making it easy to cache
their results.  However in practice caching seems to degrade
performance: simply caching the result of~\evalCut causes a 10-20\%
increase in runtime on the benchmarks of \cref{secevalartif}.
Normalization problems in \LKt do not seem to repeat often enough to
warrant a cache.

We used named variables as a binding strategy since this is
traditionally used in GAPT.  As expected, this choice has resulted in
a number of overbinding-related bugs, which were difficult to debug.
However, with named variables we can often avoid renaming
when traversing and substituting proofs---where other approaches such as
de Bruijn indices or locally nameless would always require
renaming, or instantiation and abstraction, resp.  Since every term in \LKt contains
(multiple) binders, it seems prudent to avoid renaming in the common
case.  It may be possible to implement an efficient binding strategy
using de~Bruijn indices or a locally nameless representation by adding
explicit renaming inferences.

Proof assistants such as Lean, Coq, or Minlog also provide functions to
normalize proofs.  It would be interesting to compare their performance
to the approaches implemented in GAPT.

\section{Conclusion}

Term assignments to proofs provide an elegant
implementation technique for the efficient computation and
transformation of proofs.  We have
obtained a speed-up of several orders of magnitude just by switching the
representation from trees of sequents to untyped proof terms.
The normalization procedure implemented in this paradigm and described
in this paper is fast, supports higher-order cuts, can unfold induction
inferences, and can normalize cuts in the presence of all inference
rules supported by GAPT.  As shown in \cref{secfurstenberg}, we can now
practically cut-eliminate proofs which were out of reach before.

However our ultimate interest lies in the quantifier structure of (cut-free)
proofs as captured by Herbrand disjunctions or (in general) expansion
proofs.  From this point of view, we are not restricted to
cut-elimination in LK or inessential variations like \LKt.
Another option that is radically different from what we have considered
so far is to use functional interpretation to compute expansion proofs
as described by Gerhardy and Kohlenbach in~\cite{Gerhardy2005Extracting}.

For proofs with only universally quantified first-order cuts, a certain
type of tree grammar describes the quantifier
inferences~\cite{Hetzl2012Applying}, and the language generated by such
a grammar then directly corresponds to a Herbrand sequent.  We plan to
develop and implement extensions of this grammar-based approach to
general first-order cuts for an efficient extraction of Herbrand
disjunctions, see~\cite{Afshari2018Herbrand} for grammars describing
general prenex cuts.

\bibliographystyle{eptcs}
\bibliography{fastlkt}
\end{document}